\font\sixteen=cmr10 at 16pt
\newcommand{\ignore}[1]{}
\newcommand{\isequal}{\stackrel{\rm ?}{=}}
\newcommand{\qed}{\hspace*{\fill}\rule{7pt}{7pt}}
\newenvironment{proof}{\noindent{\bf Proof.}\hspace*{1em}}{\qed\medskip}
\newtheorem{definition}{Definition}
\newtheorem{theorem}{Theorem}
\begin{document}

\title{Efficient Defence against Misbehaving TCP Receiver DoS Attacks}
\author{Aldar C-F. Chan\\
Institute for Infocomm Research, A*STAR Singapore}

\maketitle

\begin{abstract}
The congestion control algorithm of TCP relies on {\em correct} feedback from the receiver to determine the rate at which packets should be sent into the network.  Hence, correct receiver feedback (in the form of TCP acknowledgements) is essential to the goal of sharing the scarce bandwidth resources fairly and avoiding congestion collapse in the Internet.  However, the assumption that a TCP receiver can always be trusted (to generate feedback correctly) no longer holds as there are plenty of incentives for a receiver to deviate from the protocol.  In fact, it has been shown that a misbehaving receiver (whose aim is to bring about congestion collapse) can easily generate acknowledgements to conceal packet loss, so as to drive a number of honest, innocent senders arbitrarily fast to create a significant number of non-responsive packet flows, leading to denial of service to other Internet users.  We give the first formal treatment to this problem.  We also give an efficient, provably secure mechanism to force a receiver to generate feedback correctly; any incorrect acknowledgement will be detected at the sender and cheating TCP receivers would be identified.  The idea is as follows: for each packet sent, the sender generates a tag using a secret key (known to himself only); the receiver could generate a proof using the packet and the tag alone, and send it to the sender; the sender can then verify the proof using the secret key; an incorrect proof would indicate a cheating receiver.  The scheme is very efficient in the sense that the TCP sender does not need to store the packet or the tag, and the proofs for multiple packets can be aggregated at the receiver.  The scheme is based on an aggregate authenticator.  In addition, the proposed solution can be applied to network-layer rate-limiting architectures requiring correct feedback.

\end{abstract}

\section{Introduction}
The congestion control algorithm \cite{Jacobson88} of TCP (Transmission Control Protocol \cite{rfc793}) is an essential component contributing to the stability of the current Internet \cite{KMT98}. It allows fair sharing of the scarce bandwidth resources \cite{CR89} and avoids congestion collapse in the Internet \cite{KMT98}.  However, it has been found vulnerable to incorrect feedback from a misbehaving receiver by Savage et. al. \cite{SCWA99} and subsequently by Sherwood et. al. \cite{SBB05}.  More specifically, when a misbehaving TCP receiver generates optimistic acknowledgements (acknowledging the receipt of TCP segments which have not been received), the sender of the TCP session could be deceived into believing a fast connection is available and, as a result, injects more packets into the network (through increasing its TCP flow window and sending new packets) despite that the acknowledged TCP segments may have been lost in a heavily congested network.  Note that the stability of the Internet relies on TCP senders to back off their traffic when congestion occurs as inferred by packet loss \cite{Jacobson88,FF99}.  Such a vulnerability in the TCP protocol itself could be threatening to the resilience of the Internet infrastructure as the effort needed for an attacker to bring down part of the Internet is minimal.

The TCP congestion control algorithm \cite{Jacobson88,rfc793,rfc2581} was designed based on the assumption that a TCP receiver can always be trusted \cite{clark88} and all his feedback in the form of acknowledgements \cite{Allman98} for correctly received packets is a reliable source of information about the traffic/congestion condition in the network.  When network congestion occurs at some routers in the network, packet dropping starts to occur at the input queues of the congested routers.  The design of the TCP congestion control algorithm considers packet loss as the onset of congestion somewhere in the network  \cite{Jacobson88,rfc793,rfc2581,FF99}.  The TCP sender thus slows down its rate of injecting packets into the network (through multiplicatively reducing its TCP flow window and slightly inflating its timeout timer) with a view to relieving the congestion condition.  However, the information of packet loss is usually provided solely by the receiver in the form of TCP acknowledgements \cite{Allman98}.  When this information is false, the TCP sender may respond to a network congestion in the oppositive direction, leading to non-responsive traffic flows which cause other {\em honest} TCP users to back off their traffic.  These non-responsive flows, in essence, form a type of denial of service (DoS) attacks, as first identified by Floyd et. al. \cite{FF99} and subsequently illustrated by Sherwood et. al. \cite{SBB05}.  The effect of such non-responsive flows can easily be multiplied by a deliberate attacker using a small TCP segment size \cite{Reed01}.  It should be emphasized that the innocent TCP sender is unaware of the fact that his traffic has caused a serious congestion; the sender is deceived by the malicious receiver.  More seriously, subtle attackers can exploit this vulnerability of TCP to bring down the operation of the Internet by pumping in an enormous amount of traffic through innocent TCP senders deceived by a distributed network of malicious TCP receivers.  In fact, Sherwood et. al. have shown the possibility of bringing terribly extensive network congestion over a prolonged period using such an attack \cite{SBB05}.

Unlike most distributed denial of service (DDoS) attacks, the attacker in concern does not manipulate machines under his control to send traffic to the victim.  Instead, the machines controlled by the attacker would solicit traffic from a possibly large number of innocent victims, which are usually trustworthy (say, a reputable website hosted by the national government).  These traffic streams would converge towards the attack machines and cause serious congestion in the network as they merge together.  There is an additional level of indirection, compared to usual DDoS attacks \cite{Paxson01}.
Usually, DoS defence mechanisms assume a malicious sender and some of them even rely on the receiver to trigger the start of, say, a rate-limiting mechanism \cite{MBFIPS02,ABFKMS08,AC09}.  Hence, these mechanisms would be ineffective in identifying a misbehaving TCP receiver or limiting the impact of its DoS attack.
Puzzle-based defence \cite{WR04} could possibly limit the rate at which a single server/sender injects traffic into the network; but this still cannot restrict a malicious TCP receiver from attracting traffic.  Note that a TCP receiver can acknowledge a whole window of TCP segments/packets using just a single TCP acknowledgement.  A clever attacker can open many TCP sessions with different servers, requesting large files; while each of these servers may only send data at moderate rates, all these flows converge towards the attacker and the resulting traffic aggregate (from all the servers) could be large and overwhelm several en-route routers close to the attacker.  The non-responsive nature of these flows can make the congestion situation equally bad as in usual DDoS attacks.  Of course, ad hoc approaches using a firewall which denies all incoming traffic to certain destinations, as used in the Estonian defence \cite{EstoniaAttack07}, could effectively protect the local network.  But congestion could still happen for a considerable period at various routers in the Internet core.  Moreover, this reactive approach is passive; preventive approaches are desirable and essential for the stability of the Internet, in particular, as the Internet is increasingly complex and TCP is its core protocol at present and likely also in the future.

While the misbehaving TCP receiver attack may not be a common DoS attack due to the sophisticated technical knowledge needed in launching such an attack successfully, it is undoubted that the destruction caused by a successful attack could be huge in the Internet fabric \cite{SBB05}.  Although the occurrence probability is uncertain or might not be very large, considering the probability-impact product of the misbehaving TCP receiver attack, it is still worthwhile to seek a fix for such a known vulnerability.  In particular, this attack could be a very significant security threat to the availability of certain time-critical networks, such as the envisioned Smart Grid using the converged network approach.

Despite the serious impact which could be brought about by an attacker-manipulated set of misbehaving TCP receivers (whose goal is not merely to obtain faster downloads but instead to exploit this TCP vulnerability to mount a distributed denial of service attack against the entire Internet), there lacks a rigorous study and modeling of the security goal against malicious TCP receivers.  As a consequence, it is fair to say the effectiveness of the existing solutions \cite{SCWA99,SBB05} remains uncertain.
On the contrary, the solution proposed in this paper is provably secure \cite{GM84,KM07} in the sense that any breach of security found in the proposed solution can be traced back to a vulnerability in the underlying cryptographic primitive (which is widely assumed to be secure), meaning the insecurity of the latter is the only cause for the insecurity of the former.  However, our solution requires modifications on both TCP senders and receivers.  Nevertheless, no modification to the core functionality of the congestion control algorithm of TCP is made in our solution; changes (in our solution) will only be implemented as a TCP option and incorporated into codes for handling TCP acknowledgements and session management specified in RFC 793 \cite{rfc793}.  Neither will modifications be necessary to any routers.

We address the security problem caused by a misbehaving TCP receiver using an approach which forces the TCP receiver to prove the receipt of TCP segments based on cryptographic tools.  In essence, we force the TCP receiver to tell the truth.  We call the protocol a Verifiable Segment Receipt (VSR) protocol.  In the VSR protocol, a TCP receiver has to generate some proof to convince the sender that all the packets he claims to have received are actually received.  The proof is constructed from the received segments and the attached tags generated by the sender with his secret key.  Without the sender's secret key, it is difficult to generate a valid tag for a particular session.  When a misbehaving TCP receiver wants to lie about the receipt of a certain TCP segment he has actually not received, he will not be able to create (without the knowledge of the lost segment and its tag) a valid VSR proof (acceptable by the sender) even though he may possess the data being transferred (through a prior download)\footnote{An attacker may have knowledge of the payload of the concerned TCP segment through prior downloads from the same sender/server.}; in other words, a malicious TCP receiver cannot launch a playback attack, namely, he cannot generate a valid VSR proof for a lost TCP segment in the current session even though he has complete knowledge of the payload of that segment.  All forged proofs (for segments not received) will be caught by the sender which can then terminate the associated TCP connection.  We propose a VSR protocol construction based on an (additive) aggregate authenticator \cite{CCMT08}.  The protocol is provably secure assuming the indistinguishability property of a pseudorandom function \cite{GGM86}.  A merit of our VSR protocol construction is that the validity of a proof from the TCP receiver can be verified using the sender's secret key only, and {\em the sender does not need to remember the content of any TCP segment or its tag even though the corresponding VSR proof has not yet been received}.  In addition, the VSR proofs for multiple TCP segments can be aggregated to minimize the bandwidth overhead.

It should be emphasized that the VSR protocol addresses a problem different from that of remote integrity check \cite{CX09} or proof of retrievability \cite{JK07,ABCHKPS07}.  There are two subtle differences.  First, the VSR protocol has to guard against playback attacks in which the attacker possesses the data being transferred; while a prover in possession of the stored file can definitely produce a valid proof in the proof of retrievability scheme.  Second, a receiver only needs to generate a proof once (in the current session) for each piece of data received in the VSR protocol\footnote{In the VSR protocol, a new instantiation will be executed for a new session even for the same piece of data.} whereas several executions of the verification protocol for the same piece of data in a proof of retrievability scheme are expected.

\medskip
\noindent {\bf Contributions.}  The contributions of this paper are two-fold: (1) A formal treatment to the security problem of malicious TCP receivers is given, which is a key step for achieving provable security \cite{GM84,KM07}. This paper gives the first formal security model in the literature for the misbehaving TCP receiver problem.  (2) A provably-secure protocol to solve the problem is proposed and the proposed construction is the first provably secure solution for the misbehaving TCP receiver problem.  Besides, the protocol is reasonably efficient to be incorporated into the TCP protocol; empirical studies are also done to show that the proposed solution is practical in terms of computation, communication and storage overhead.  Furthermore, the TCP sender in the VSR protocol does not need to store any TCP segment or its tag in order to verify the validity of the receiver's VSR proof for that segment.  In other words, no storage is required at the TCP sender or receiver.

While the problem considered in this paper is mainly in the context of network security, we give a cryptographic treatment to this problem, including the formal security model and cryptographic solution.  The efficiency of the proposed solution could possibly weaken the folklore impression in the network community that cryptography can seldom be applied to network security in a light-weight manner.

\bigskip
The organization of the paper is as follows.  A formal security model, along with possible security notions for the VSR protocol, is given in Section \ref{sect:security-model}.  Section \ref{sect:construction} presents the VSR protocol.  The computation and communication overhead of the proposed protocol is discussed in Section \ref{sect:compute}, along with other practical considerations such as incremental deployment.  Section \ref{sect:related-work} discusses related work in the literature.

\section{\label{sect:security-model} Security Model for Verifiable Segment Receipt (VSR) Protocol}

Recall that our defence against a misbehaving TCP receiver is based on the idea requiring a TCP receiver to prove to a sender that it has received all the TCP segments it claims.  We call this protocol a Verifiable Segment Receipt (VSR) protocol.  More specifically, the TCP sender attaches some information as tags to segments sent, whereby the TCP receiver can construct a proof from the segments and the corresponding tags received; the proof is then returned as part of the TCP acknowledgement for the sender to verify.  In a properly designed scheme, a misbehaving TCP receiver (which claims the receipt of TCP segments it actually has not received) can only pass the sender's verification with negligibly small probability.

\subsection{Syntax Definitions}
A VSR protocol consists of one sender $S$ and one receiver $R$.  Suppose the victim (sender $S$) has a universal set ${\cal F}$ of files, each with a unique file identifier $fid \in \{0,1\}^*$.  Without loss of generality, $fid$ can be considered as the filename of a particular file in the set ${\cal F}$ held by $S$.  Let $M \in \{0,1\}^*$ be the content of the file to be transferred from $S$ to $R$.  Suppose $S$ and $R$ have agreed on a default segment size in the initial TCP handshake of a usual TCP connection (as specified in RFC 793 \cite{rfc793}) such that the file content $M$ is divided to fill $n$ equally-sized TCP segments $s_i \in \{ 0,1 \}^*$ indexed by $i$ where $1 \le i \le n$.

Each $s_i$ consists of a $40$-byte TCP header $h_i$ and an $l$-byte message payload $m_i$, that is, $s_i = h_i ||  m_i$ (where $||$ denotes concatenation).  Note that the sequence number of the TCP segment $s_i$ is $(i-1) \cdot l + ISN + offset$ where $ISN$ is the initial sequence number of a TCP connection and $offset$ accounts for the bytes used in connection initiation.  What we actually demonstrate is that segment indices (used in this paper) and sequence numbers in a normal TCP connection are interchangeably convertible to each other; we will stick to using segment indices in the discussion of this paper.  Without loss of generality, assume $M = m_1|| \cdots || m_i || \cdots || m_n$.\footnote{In case $|M|$ is not an integral multiple of $|m_i|$, $m_n$ would be padded with zeros.}   A VSR protocol consists of the following algorithms.

\begin{description}
\item[Setup ($\textsf{setup}$).]
$\textsf{setup}(1^{\lambda}) \rightarrow param$ takes the security parameter $\lambda$ and outputs the public parameters $param$ (such as the hash function and its size) used in the VSR protocol.

\item[Key Generation ($\textsf{KG}$).]
Let $\textsf{KG}(1^{\lambda}, param, n) \rightarrow sk$ be a {\em probabilistic} algorithm which generates the secret key $sk$ to be used for a specific TCP session.

\item[Tag Generation ($\textsf{TG}$).]
$\textsf{TG}(param, sk, i, s_i) \rightarrow t_i$ is a {\em probabilistic} algorithm taking a TCP segment $s_i$ (with its index $i$) and the sender secret key $sk$ to generate a tag $t_i$ appended to $s_i$. That is, $(s_i, t_i)$ is sent to the receiver $R$ and the actual TCP segment sent is of the form $s_i || t_i$.


\item[Proof Generation ($\textsf{PG}$).]
 Let $I \subseteq [1,n]$.  Denote $\{s_i: i \in I \}$ by ${\cal S}(I)$ and $\{t_i: i \in I\}$ by ${\cal T}(I)$.  Then $\textsf{PG}(param, I, {\cal S}(I), {\cal T}(I)) \rightarrow p(I)$ is a {\em deterministic} algorithm used by the receiver $R$ to generate a proof $p(I)$ to convince the sender $S$ that all TCP segments in ${\cal S}(I)$ have been received.  $(I, p(I))$ is returned along with the TCP acknowledgement to $S$.

 $\textsf{PG}$ and $p(I)$ are so defined based on $I$ for the sake of generality such that both normal TCP (with cumulative acknowledgements \cite{rfc793}) and TCP-SACK (selective acknowledgement option \cite{rfc2018}) are covered.  For TCP with cumulative acknowledgements \cite{rfc793}, each $I$ is of the form $\{1, 2, ..., r \}$ where $r$ is the index of the last in-sequence segment received correctly.  In actual implementations, only $r$, rather than the whole set $I$, needs to be returned to the sender for TCP with cumulative acknowledgements.

\item[Proof Verification ($\textsf{PV}$).]
$\textsf{PV} (param, I, {\cal T}(I), p(I)) \rightarrow 0/1$ is a {\em deterministic} algorithm to check the correctness of the proof $p(I)$.  If correct, `$1$' is returned, meaning all segments in $I$ have been received by the prover/receiver $R$ which generates a correct proof $p(I)$.  `$0$' would indicate that the receiver lies.

Note that the set of tags ${\cal T}(I)$ may not be needed in some constructions; for instance, our VSR protocol construction in Section \ref{sect:construction} does not need tags for proof verification.  It is included in the syntax to make the model general.

\end{description}

{\em Since $param$ is public and forms part of the input to all algorithms in a VSR protocol, we will not explicitly write it as algorithm input in this paper.}

\subsection{Security Notions}

Four types of oracle queries (adversary interaction with the system) are allowed in the VSR security model, namely, the playback oracle ${\cal O}_{PB}$, the segment query oracle ${\cal O}_{SQ}$, the tag access oracle ${\cal O}_{TA}$ and the proof verification oracle ${\cal O}_{PV}$.  Their details
are as follows:

\begin{description}
\item[Playback Oracle ${\cal O}_{PB}(fid, l)$.]
On input a playback oracle query $\langle fid, l \rangle$, the playback oracle retrieves the file with identifier $fid$ and content $M \in \{0,1\}^*$ and breaks down $M$ into $l$-byte units to fill a number of (say $n$) segments $s_i$ where $1 \le i \le n$; the TCP headers are determined by emulating the initiation of a new TCP session.  The playback oracle then generates the secret key $sk$ by running $\textsf{KG}$ and computes tags $t_i = \textsf{TG}(sk, i, s_i)$ for all $i \in [1, n]$, and replies with $\{(s_i, t_i): 1 \le i \le n \}$.

\item[Segment Query Oracle ${\cal O}_{SQ}(i)$.]
For a current and established TCP session (with fixed default segment size $l$ and fixed total number $n$ of segments for a file with identifier $fid$ and content $M$), for fixed secret key and segment decomposition, on input a segment query $\langle i \rangle$ (where $1 \le i \le n$), the segment query oracle retrieves the segment $s_i$ and the secret key $sk$ and runs the tag generation algorithm $\textsf{TG}$ to generate the tag $t_i = \textsf{TG}(sk, i, s_i)$ and sends $(s_i, t_i)$ as the reply.

\item[Tag Access Oracle ${\cal O}_{TA}(i)$.]
For a current and established TCP session (with fixed default segment size $l$ and fixed total number $n$ of segments for a file with identifier $fid$ and content $M$), for fixed secret key and segment decomposition, on input a tag access oracle query $\langle i \rangle$ (where $1 \le i \le n$), the tag access oracle retrieves the segment $s_i$ and the secret key $sk$ and runs the tag generation algorithm $\textsf{TG}$ to generate the tag and sends the result $\textsf{TG}(sk, i, s_i)$ as the reply.

\item[Proof Verification Oracle ${\cal O}_{PV}(I, p(I))$.]
For a current and established TCP session (with fixed default segment size $l$ and fixed total number $n$ of segments for a file with identifier $fid$ and content $M$), for fixed secret key and segment decomposition, on input a proof verification oracle query $\langle I, p(I) \rangle$ (where $I \subseteq [1,n]$), the proof verification oracle runs $\textsf{PV}$ and relays the result ($0$ or $1$) back.

\end{description}
Normally, the playback oracle is accessed by an adversary before launching the actual attack session while the segment query, tag access and proof verification oracles are usually accessed in the actual attack session.  The playback oracle is to model the scenario that an adversary first accesses the victim websites (by honestly following the TCP congestion control algorithm and sending back correct feedback) to learn their behavior and obtain a number of large files (along with the used TCP headers and VSR tags) and stores this information as reference for launching the subsequent actual attack.  The playback attack would be a real threat in the misbehaving TCP receiver attack.

The difference between the segment query oracle and the tag access oracle is in the reply content: an adversary can only obtain a tag along with the corresponding segment in a segment query, whereas, an adversary can obtain (via some unknown means such as a man-in-the-middle attack) just the tag (without a segment) of a lost TCP segment in a tag access query.

Note that the tag access oracle can be treated as some form of  man-in-the-middle attacks in the narrow sense.  The middle man in a man-in-the-middle attack could perform various computation on a set of captured TCP segments and their VSR tags and pass the result of computation to the attacker while a tag access oracle only detaches the tag from a captured TCP segment and passes the tag to the attacker.

\smallskip
As a basic requirement, a VSR protocol needs to satisfy the correctness property stated as follows.

\smallskip
\noindent {\bf Correctness.}
For all files with content $M \in \{0,1\}^*$, all choices of secret key $sk$, and all $I \subseteq [0,n]$, if $s_i$ is a valid and correctly indexed segment from $M$ and $t_i \leftarrow \textsf{TG}(sk, i, s_i)$ for all $i \in I$, then
\[\textsf{PV} (I, {\cal T}(I), \textsf{PG}(I, {\cal S}(I), {\cal T}(I))) = 1.\]

\smallskip
Recall that the goal of a VSR protocol is to ensure that a TCP receiver cannot deceive a TCP sender into believing that it has received a TCP segment which is actually lost in the network.  In the usual networking environment, a TCP receiver would receive most of the transmitted TCP segments (and their VSR tags) while a small portion of them may be dropped at the input queues of congested routers enroute. In the normal situation, it is implicitly assumed that the tags for all lost segments will not be accessible to a TCP receiver; this assumption could be reasonable since when a TCP segment (more precisely, an IP datagram) is dropped at a congested router, there is no way for a TCP receiver to learn the tag of that TCP segment which has never arrived at the receiver unless there is some out-of-band channel accessible to the receiver.\footnote{For instance, the misbehaving TCP receiver installs some agents close to the sender to grab the tag of a TCP segment before it is dropped by a router and to send him just the tag.  However, this channel is contrive.}  Nevertheless, for completeness, in our model, we include a tag access oracle to provide tags of lost segments for an adversary.

Informally, there are two types of attacks for a VSR protocol to withhold: known file content (KFC) attacks and tag access (TA) attacks.  In a known file content attack, the adversary has access to the file content that a TCP session transfers; the adversary could have downloaded the whole file content beforehand through honestly running another TCP session; when launching the actual attack session, he can leverage on this downloaded content; consequently, a proper VSR protocol construction should add some sort of freshness into the actual file content to guard against the known file content attack.  In a tag access attack, the adversary can have some unknown means (say man-in-the-middle attacks) to access the tag of a TCP segment without obtaining the whole TCP segment.

We have three notions for the security (or soundness) of the VSR protocol.  The first one (security against known file content and tag access attacks) being the strongest notion might not be achievable.  The second one (security against known file content only attacks) could be the default one which reasonably fits real world scenarios.  The third one (security against tag access attacks) could be considered as a weak form of man-in-the-middle attacks in the context of this paper.

\subsubsection{Security against Known File Content and Tag Access Attacks ($\textsf{KFC-TA}$-secure).}

To define security against know file content and tag access attacks ($\textsf{KFC-TA}$ attacks) for the VSR protocol, we use the following game played between a challenger and an adversary.  If no PPT (Probabilistic, Polynomial Time) adversary can win the game with non-negligible advantage (as defined below), we say the VSR protocol is $\textsf{KFC-TA}$-secure.

\begin{definition}
\label{def:kfc-ta-secure}

A VSR protocol is secure against known file content and tag access attacks (i.e. $\textsf{KFC-TA}$-secure) if the advantage of winning the following
game is negligible in the security parameter $\lambda$ for all $PPT$
adversaries.
\end{definition}
\begin{description}
\item[Setup.]
The challenger runs $\textsf{setup}$ to generate the public parameters $param$ and gives $param$ to the adversary.

\item[Query 1.]
The adversary can issue to the challenger any playback queries on file $fid_j$ and default segment size $l_j$ of his choice.  Assuming the file content $M_j$ is broken down to fill $n_j$ segments, the challenger picks a new secret key $sk_j$ (if needed in the protocol) by running $\textsf{KG}$ and responds with $\{(s_{ji}, \textsf{TG}(sk_j, i, s_{ji})): 1 \le i \le n_j \}$; that is, the adversary obtains the set of all TCP segments and VSR tags for the transfer of file $fid$ for a particular TCP session $j$.   The adversary is allowed to make {\bf \em nested} queries of segment queries, tag access queries and proof verification queries in each playback query.  Denote the set of queried file identifiers in Query 1 phase by ${\cal F}_1$.

\item[Challenge.]
Once the adversary decides that the first query phase is over, it selects a file identifier $fid$ and a default segment size $l$ to ask the challenger for a challenge TCP session.  There is no constraint imposed on the choice of file.  Denote the file content of $fid$ by $M$.  The challenger runs standard TCP initiation handshake with the adversary and determines TCP headers to be used.  The challenger generates a new secret key $sk$ and breaks down $M$ to fill say $n$ TCP segments.  The adversary is challenged with the task of creating a correct proof for TCP segments he has not queried in the second query phase.

\item[Query 2.]
The adversary is allowed to make more queries to all oracles as previously done in Query 1 phase except proof verification queries on the challenged TCP session unless the reply result is $1$.\footnote{A TCP sender should terminate a TCP session when an incorrect proof is received from the receiver.}  Playback queries on the challenge $fid$ is even allowed; note that, for playback queries, even the same file identifier (as chosen in the challenge TCP session) is requested, a different secret key and random coin will be used for the new query session (as described in the playback oracle). Denote the set of indices for ${\cal O}_{SQ}$ and ${\cal O}_{TA}$ queries {\em on the challenge TCP session} by ${\cal I}_{SQ}$ and ${\cal I}_{TA}$ respectively.  Let the set of queried file identifiers $fid_j$ in Query 2 phase be ${\cal F}_2$.

\item[Guess.]
Finally, the adversary outputs a guess $(I, p(I))$ for some $I \subseteq [1,n]$ such that $I \backslash {\cal I}_{SQ} \neq \phi$ (where $\phi$ is the empty set).

\item[Result.]
The adversary wins the game if $(I, p(I))$ passes the verification test $\textsf{PV}$.  The advantage $Adv_{\cal A}$ of the adversary ${\cal A}$ is defined as the probability of winning the game.
\end{description}

\subsubsection{Security against Known File Content Only Attacks ($\textsf{KFC}$-secure).}

Security against known file content attacks for the VSR protocol is defined by the same game in Definition~\ref{def:kfc-ta-secure} except that no tag access (${\cal O}_{TA}$) queries can be made on the challenge TCP session in Query 2 phase.  That is, ${\cal I}_{TA} = \phi$.

\subsubsection{Security against Tag Access Only  Attacks ($\textsf{TA}$-secure).}
Security against tag access only attacks for the VSR protocol is again defined by the same game in Definition~\ref{def:kfc-ta-secure} except two modifications as depicted below.
\begin{enumerate}
\item In the Challenge phase, the adversary has to choose a challenge $fid \not \in {\cal F}_1$, that is, $fid$ has to be a new file identifier not used as input for the playback queries in the Query 1 phase.

\item In the Query 2 phase, the adversary can only make playback queries on $fid_j \neq fid$ where $fid$ is the file identifier for the challenge TCP session.  That is, $fid \not \in {\cal F}_2$.
\end{enumerate}

\medskip
Informally speaking, a $\textsf{KFC-TA}$-secure VSR protocol construction would be able to catch a misbehaving TCP receiver who has full knowledge of the file content in consideration and has installed agents to help him obtain VSR tags through some side channel.  A $\textsf{KFC}$-secure or $\textsf{TA}$-secure VSR protocol construction can only withhold weaker attacks: A $\textsf{KFC}$-secure VSR protocol construction can catch a lying TCP receiver who has full knowledge of the file content but is unable to grab VSR tags through a side channel, whereas, a $\textsf{TA}$-secure VSR protocol construction can withhold a misbehaving TCP receiver who can grab VSR tags through a side channel but has no knowledge about the file content being transferred.

\section{\label{sect:construction} A Construction of Verifiable Segment Receipt Protocol}

A construction of the VSR protocol for use in TCP is given, based on an aggregate authenticator introduced in \cite{CCMT08}.  We call this protocol Verifiable Segment Receipt protocol with Aggregate Authenticator (VSR-AA).  An aggregate authenticator assures the integrity of messages while allowing (additive) aggregation to be performed separately on messages and tags.

Besides being proven secure, the main advantage of the VSR-AA protocol over the random number approach in \cite{SCWA99} is that the TCP sender in the VSR-AA protocol does not have to store any tags while the corresponding TCP segments are in transit to the receiver. In fact, the sender does not even need to store the TCP segment itself but is still able to verify (without errors) the correctness of a proof returned by the TCP receiver; that is, any incorrect proofs by a misbehaving receiver would be caught with high probability.   This advantage renders the VSR-AA compatible with the TCP-SACK option \cite{rfc2018}, whereas, to use the TCP-SACK option, the approach in \cite{SCWA99} cannot use cumulative sum of random numbers to reduce storage but has to store each random number separately.

                        \vspace{0.1cm}

                        {
                        }

                        {
                        }

The VSR protocol is constructed based on an (additive) aggregate authenticator \cite{CCMT08} which allows aggregation on messages and aggregation on tags.
In turn, the aggregate authenticator in \cite{CCMT08} is constructed from a pseudorandom function (PRF) \cite{GGM86,GGM84}.  For an in-depth treatment of PRFs, we refer to \cite{gold98}. In our context, a PRF is needed to derive secret keys for different TCP segments from the TCP session keys (freshly picked for each new TCP session).

Let $F = \{F_{\lambda}\}_{\lambda \in \mathbb{N}}$ be a PRF
family where $F_{\lambda} = \{ f_s: \{0,1\}^{\lambda} \rightarrow
\{0,1\}^{\lambda} \}_{s \in \{0, 1 \}^{\lambda}}$ is a collection of
functions\footnote{While we denote both the input and the secret key of $f$ with the same length, they need not have the same length in PRFs, namely, we can have a PRF of the form: $f: \{0,1\}^{\lambda} \times \{0,1\}^{l_i} \rightarrow \{0,1\}^{l_o}$ where the key length is $\lambda$, input length is $l_i$ and output length is $l_o$.} indexed by key $s \in \{0,1\}^{\lambda}$. Informally,
given a function $f_s$ from a PRF family with an unknown key $s$,
any PPT distinguishing procedure allowed to get the values of
$f_s(\cdot)$ at (polynomially many) arguments of its choice should
be unable to distinguish (with non-negligible advantage in
$\lambda$) whether the answer of a new query is supplied by $f_s$ or
randomly chosen from $\{0,1\}^{\lambda}$.  The VSR-AA protocol is both $\textsf{KFC}$-secure and $\textsf{TA}$-secure if the underlying key derivation function $f$ is a pseudorandom function.

The arithmetics for the VSR-AA protocol is done in some finite field $\mathbb{Z}_p$ where $p$ is some large prime (at least $\lambda$-bit long).  Alternatively, the extension field of $\mathbb{Z}_2$ can be used; the advantage is that extension fields of $\mathbb{Z}_2$ can usually result in efficient logic circuits and efficient computation procedures on most platforms.

A hash function $h: \{0, 1\}^* \rightarrow \mathbb{Z}_p$ is used in the VSR-AA protocol.  The hash function used in VSR-AA is merely for the mapping purpose (mapping an arbitrary binary string to an element in $\mathbb{Z}_p$) and no security requirement is thus needed.  In case the segment size is a multiple of $|p|$ (the length of $p$), $h$ can simply be implemented by breaking down the input segment into units of length $|p|$ (each of which can be represented as an element in $\mathbb{Z}_p$) and summing up all these units in $\mathbb{Z}_p$ and treating the sum as the output of $h$.

\medskip
\noindent The VSR-AA protocol runs as follows.
\medskip

\begin{table}[phtb]
  \begin{center}
      \begin{tabular}{p{5.5in}} \hline
                  The VSR Protocol based on an Aggregate Authenticator (VSR-AA) \\
                        \hline
                            \\
                        Assume the file being transferred can be broken down to fill in a total of $n$ TCP segments indexed by $i$.
                        \vspace{0.1cm}

                        {\it Setup ($\textsf{setup}$):}\\
                        {
                        For a security parameter $\lambda$, choose a large prime $p$ where $p$ has to be at least $\lambda$ bits long.  Choose a pseudorandom function $f: \{0,1\}^{\lambda} \times \{0,1\}^* \rightarrow \mathbb{Z}_p$ (with key length $\lambda$).  Let $h: \{0,1\}^* \rightarrow \mathbb{Z}_p$ be a chosen length-matching hash function.
                        }
                        \vspace{0.1cm}

                        {\it Key Generation ($\textsf{KG}$) by $S$:}\\
                        {
                        Randomly choose two keys $K \in \mathbb{Z}_p^*$ and $K' \in \mathbb{Z}_p^*$.  $K'$ is the session key used to derive segment keys while $K$ is used as a long term segment key for a TCP session.
                        }
                        \vspace{0.1cm}

                        {\it Tag Generation ($\textsf{TG}$) by $S$:}\\
                        {
                        For each TCP segment $s_i$ with index $i$,
                        \begin{enumerate}
                        \item  retrieve the session keys $K, K'$;
                        \item  generate a new segment key $k_i$ by computing $k_i = f_{K'}(i)$;
                        \item  compute the tag $t_i = K \cdot h(s_i) + k_i \text{ mod } p$;
                        \item  $(s_i,t_i)$ is sent to the receiver.
                        \end{enumerate}
                        }
                        \vspace{0.1cm}

                        {\it Proof Generation ($\textsf{PG}$) by $R$:}\\
                        {
                        To generate a proof for the receipt of all segments with indices in $I$,
                        \begin{enumerate}
                        \item for each $i \in I$, retrieve $(s_i, t_i)$;
                        \item compute $x = \sum_{i \in I} h(s_i) \text{ mod } p$;
                        \item compute $y = \sum_{i \in I} t_i \text{ mod } p$;
                        \item output $(x,y)$ as the proof $p(I)$.
                        \end{enumerate}
                        }
                        \vspace{0.1cm}

                        {\it Proof Verification ($\textsf{PV}$) by $S$:}\\
                        {
                        Given a proof $(I, x' y')$, the verification is performed as follows.
                        \begin{enumerate}
                        \item Retrieve the session keys $K, K'$;
                        \item Compute $K_I = \sum_{i \in I} f_{K'}(i) \text{ mod } p$.
                        \item Check $y' \isequal K_I + K \cdot x' \text{ mod } p$.  If yes, return $1$, otherwise, return $0$.
                       \end{enumerate}
                            }
                     \\ \hline
                \end{tabular}
  \end{center}
\caption{Operation of VSR-AA}
\label{table:vsr-aa}
\vskip -0.3cm
\end{table}

\pagebreak


\noindent The correctness of the VSR-AA protocol can be easily verified as follows:

\noindent For any $I \in [1,n]$,
\begin{equation}
\label{eq:tag-gen}
\begin{array}{l}
x = \sum_{i \in I} h(s_i) \text{ mod } p \quad \text{and} \quad y = \sum_{i \in I} t_i \text{ mod } p, \qquad \text{ where } t_i = K \cdot h(s_i) + k_i \text{ mod } p.
\end{array}
\end{equation}

\noindent Substituting $t_i$ into $y$, we have
\[
\begin{array}{lcl}
y & = & \sum_{i \in I} t_i \text{ mod } p\\
  & = & \sum_{i \in I} (K \cdot h(s_i) + k_i) \text{ mod } p\\
  & = & K \cdot \sum_{i \in I} h(s_i) \text{ mod } p + \sum_{i \in I} k_i \text{ mod } p \\
  & = & K \cdot x + \sum_{i \in I} k_i \text{ mod } p
\end{array}
\]

\noindent Substituting $k_i = f_{K'}(i)$, we have $\sum_{i \in I} k_i \text{ mod } p = \sum_{i \in I} f_{K'}(i) \text{ mod } p = K_I$.  Hence,
\begin{equation}
\label{eq:check}
y = K \cdot x + K_I \text{ mod } p.
\end{equation}
That is, for any correctly generated $x$ and $y$ according to Equation (\ref{eq:tag-gen}), they would fulfill the check equation, Equation (\ref{eq:check}), and pass the proof verification.

The security of the VSR-AA protocol is based on the difficulty in determining the actual coefficients used in an under-determined equations.  More specifically, given $x$ and $y$ where $y = K' + K \cdot x$ for fixed, secret $K$ and $K'$, determine the actual $K$ and $K'$ used in forming $x$ and $y$.  If $K$ and $K'$ are randomly picked, it can be shown that $(x, y)$ does not give sufficient information to determine $K, K'$.  For arithmetics in $\mathbb{Z}_p$, there are $p$ possible 2-tuples of $(k, k')$ which can lead to the given $(x,y)$; only one out of $p$ is the actual pair $(K, K')$.  Hence, for a large enough prime, there is a negligibly small probability $1/p$ to guess $(K, K')$ correctly.  Nevertheless, this problem can be solved with ease if an adversary breaking the VSR-AA protocol exists.  The idea is as follows.

Suppose we are given a pair $(x,y)$ and asked to determine the actual $K, K'$ used.  Without loss of generality, assume a single segment in the following discussion.  This $(x, y)$ can be treated as the expected VSR proof $(h(s_i),t_i)$ for a segment-tag pair $(s_i, t_i)$ sent out to the receiver.\footnote{Note there is no need to find $s_i$ such that $h(s_i) =x$ although it is possible.  $(s_i,t_i)$ will not be sent to the receiver who tries to lie about the receipt of it. }  In order to lie about the receipt of a pair $(s_i, t_i)$ which is actually lost and to convince the sender, the receiver needs to determine a 2-tuple $(x',y')$ to fulfil the constraint equation:
\begin{equation}\label{eq-vsr-aa}
y' = K' + K \cdot x'
\end{equation}
where $K$ and $K'$ are unknown.

Note also that the receiver has no knowledge about $(x,y)$.
There are $p$ possible 2-tuples of $(x', y')$ fulfilling equation (\ref{eq-vsr-aa}), one of which is $(x,y)$.  Suppose the tuple $(x', y')$ fulfills equation (\ref{eq-vsr-aa}).  The probability that $(x',y') \neq (x,y)$ is $\frac{p-1}{p} \approx 1$ (for large $p$).  In other words, we have two independent equations: (1) $y' = K' + K \cdot x'$ ; (2) $y = K' + K \cdot x$ to solve $K$ and $K'$ which is easy.  Consequently, we can say the VSR-AA protocol is secure.  A more rigorous argument is given in the proof.

\smallskip
\noindent The security of the VSR-AA protocol is summarized by the following theorem, with the proof in Appendix \ref{appendix:proof-vsr-aa}.
\begin{theorem}
Assuming $f$ is a pseudorandom function, the VSR-AA protocol is both $\textsf{KFC}$-secure and $\textsf{TA}$-secure.
\end{theorem}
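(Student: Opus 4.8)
The plan is to follow the same contrapositive strategy used for VSR-H, but with the pseudorandom function playing the role that the ideal hash played there, and to split the argument into a reduction step (get rid of $f$) and an information-theoretic step (bound the residual advantage). I would proceed by a two-step game hop.

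\emph{Step 1: replace $f$ by a truly random function.} In the challenge session only, I would replace the map $i \mapsto f_{K'}(i)$ that generates the segment keys $k_i$ by a truly random function $\rho\colon [1,n]\to\mathbb{Z}_p$, and show this changes the adversary's winning probability by at most the PRF distinguishing advantage. The reduction: a distinguisher $\mathcal{B}$ with oracle access to either $f_{K'}$ (uniform key) or a random function simulates the whole KFC (resp.\ TA) game for $\mathcal{A}$, answering every playback query by sampling fresh independent keys $K_j,K'_j$ itself (so those sessions, and all nested queries in them, are simulated perfectly and independently), and answering every operation that touches the challenge session — tag generation in $\textsf{TG}$, and the computation of $K_I$ inside $\mathcal{O}_{PV}$ queries — by calling its own oracle on the relevant indices; $\mathcal{B}$ outputs $1$ exactly when $\mathcal{A}$ wins. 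The point is that both security games use a fresh, independent session key per playback query, so only the challenge session's PRF instance has to be swapped out and a single-key PRF reduction suffices. (Minor point: the session key lies in $\mathbb{Z}_p^*$ rather than $\{0,1\}^\lambda$; one instantiates the PRF with the matching key space.)

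\emph{Step 2: bound the winning probability in the idealized game.} Suppose $\mathcal{A}$ outputs $(I,(x',y'))$ with $I\setminus\mathcal{I}_{SQ}\neq\emptyset$, and fix $j^*\in I\setminus\mathcal{I}_{SQ}$. Verification succeeds iff $y' = \sum_{i\in I}\rho(i) + K x'$, in which $\rho(j^*)$ has coefficient $1$. The crux is that, from $\mathcal{A}$'s entire view, the pair $(K,\rho(j^*))$ retains essentially full entropy: the only observations involving $K$ are the tags $t_i = K\,h(s_i) + \rho(i)$, and since every $\rho(i)$ ($i\in\mathcal{I}_{SQ}\cup\mathcal{I}_{TA}$) is a fresh uniform value, each such tag is uniform and independent of $K$, so $K$ is (essentially) uniform over $\mathbb{Z}_p^*$ from $\mathcal{A}$'s viewpoint. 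In the KFC case, $j^*$ was touched by no segment and no tag-access query on the challenge session, so $\rho(j^*)$ is uniform and independent of the view; conditioning on $K$, the other $\rho(i)$ with $i\in I$, and $(x',y')$, the verification equation pins $\rho(j^*)$ to one value, so it holds with probability $1/p$. In the TA case, $j^*$ may have been fetched by a tag-access query (if not, $\rho(j^*)$ is totally hidden and we are as in the KFC case); but then $\mathcal{A}$ learned only $t_{j^*} = K\,h(s_{j^*}) + \rho(j^*)$, while the payload $m_{j^*}$ — hence $s_{j^*}$ and $h(s_{j^*})$ — is never revealed to it (the challenge file is excluded from every playback query and $j^*$ from every segment query), so $\rho(j^*) = t_{j^*} - K\,h(s_{j^*})$ is still essentially uniform and independent of the view, and the same bound applies. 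Finally, each of the polynomially many $\mathcal{O}_{PV}$ queries on the challenge session that uses a cheating index set is itself such a guess, succeeding with probability $O(1/p)$ even after accounting for the linear information leaked by earlier verification answers; a union bound over these together with the final output gives a total winning probability of order $\mathrm{poly}(\lambda)/p$. This makes rigorous the informal counting argument given above.

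Combining the two steps, $\mathrm{Adv}_{\mathcal{A}} \le \epsilon_{\mathrm{PRF}}(\lambda) + O(\mathrm{poly}(\lambda)/p)$, which is negligible, establishing both $\textsf{KFC}$- and $\textsf{TA}$-security. The part I expect to be the main obstacle is the careful bookkeeping in Step 2: (i) making precise the claim that $K$ and the relevant $\rho(j^*)$ are jointly (near-)uniform given a view that contains adaptively chosen proof-verification queries — each such query removes one affine relation among the secrets, so the cumulative entropy loss must be tracked explicitly; and (ii) for $\textsf{TA}$-security, pinning down the exact sense in which $h(s_{j^*})$ is unpredictable even though $h$ carries no cryptographic guarantee — this rests on the modeling assumption that an unreceived segment's payload, which by the $\textsf{TA}$ game rules the adversary can neither download via playback nor obtain via a segment query, is information-theoretically hidden from it.
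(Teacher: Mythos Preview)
Your argument is correct and follows a genuinely different route from the paper. Rather than a game hop, the paper introduces an intermediate problem it calls \emph{Under-determined Equation Set with Pseudorandom Unknowns} (UESPU) --- given a single pair $(x,y)$ with $y=K'_I+Kx$ (where $K'_I=\sum_{i\in I}f_{K'}(i)$) and oracle access to $f_{K'}$ at points outside $I$, recover $(K,K'_I)$ --- and argues in two reduction steps: (i) a successful forger outputs a second valid pair $(x',y')$ which, with probability $(p-1)/p$, differs from the honest pair, so the two pairs give two independent linear equations in the two unknowns $K,K'_I$ and these can be solved; (ii) any UESPU solver yields a PRF distinguisher. Your single PRF-to-random hybrid followed by an information-theoretic entropy count is the more modular and standard route, and it treats the adaptive proof-verification oracle explicitly where the paper's reduction sketch is silent; it also sidesteps the question of how the UESPU simulator produces individual tags without knowing $K$. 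The paper's extraction idea is conceptually clean in that it converts a forgery directly into linear algebra, but it too ultimately relies on the adversary not knowing the honest pair $(x,y)$, which for $\textsf{TA}$-security is exactly the unpredictability-of-$h(s_{j^*})$ issue you flag in your point (ii) --- the paper simply leaves this implicit.
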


\section{\label{sect:compute} Practical Considerations}

\subsection{\label{ssect:sec-choice} Choices of PRF}
For the VSH-AA protocol to achieve $2^{80}$ security, $p$ and hence $\lambda$ need to be at least $80$ bits.  That is, each tag is $80$-bit long and each proof is $160$-bit long.\footnote{The overhead of $I$ is inherently included in the TCP acknowledgement and does not need to be counted as additional communication overhead.}
Most provably secure PRFs such as \cite{NRR02} are based on the hardness of certain number-theoretic problems.  However, such constructions are usually computationally expensive. Instead, key derivation in practice is often based on functions with conjectured or assumed pseudorandomness, i.e., it is inherently assumed in the construction rather than proven to follow from the hardness of a specific computational problem.  One common example is the use of cryptographic hash functions for key derivation, such as \cite{SWCT01}.  Some well-known primitives, such as HMAC \cite{BCK96} and OMAC \cite{IK03} (conjectured PRFs), are based on assumed pseudorandomness. HMAC assumes that the underlying compression function of the hash function in use is a PRF, while OMAC assumes the underlying block cipher is a pseudorandom permutation.

The VSR-AA protocol in this paper does not impose any restriction on the type of PRFs to be used. The security guarantee provided by the proposed construction holds as long as the underlying PRF has the property of pseudorandomness or indistinguishability.  We note that the aforementioned pseudorandomness property is also a basic requirement for the hash function used for key derivation purposes \cite{SWCT01,BCK96}, for instance, in the well-known IPSec standard.

\subsection{\label{ssect:overhead} Computational Overhead}

Assume $\lambda$ is the used security parameter, that is, $p$ in the VSR-AA protocol construction is $\lambda$ bits long.
Let $t_{add}$ and $t_{multi}$ denote the respective costs of performing a $\lambda$-bit addition and multiplication in $\mathbb{Z}_p$.  Note that $t_{add} \sim O(\lambda)$ while $t_{multi} \sim O(\lambda^2)$ (See \cite{MOV96} Chapter 2).  Let $t_{prf}$ denote the cost of evaluating the PRF plus the hash function in the VSR-AA protocol with security parameter $\lambda$.  Note that the cost of evaluating $h$ in the VSR-AA protocol is negligible compared to that of evaluating the PRF.
Let $S$ and $R$ denote the sender and receiver respectively.  Let $w$ be the maximum window size (in number of segments) for a TCP connection.  The overhead of the VSR-AA protocols is summarized as follows.

\begin{table*}[tbph]
\footnotesize
\begin{center}
        \begin{tabular}{|l||c|} \hline
              & VSR-AA \\ \hline \hline
        Storage (S)  & $0$\\
        Computation Overhead (Tag Generation) per Segment (S) & $t_{prf} + t_{add} + t_{multi}$ \\
        Computation Overhead (Tag Verification) per Window (S) & $w \cdot (t_{prf} + t_{add}) + t_{multi}$ \\
        Communication Overhead per Segment Sent (S) & $\lambda$ \\
        \hline
        Computation Overhead (Proof Generation) per Window (R) & $2 (w-1) \cdot t_{add}$ \\
        Communication Overhead per Window (R) & $2 \lambda$ \\
        \hline
        \end{tabular}
\caption{\label{table:vsr-overhead}
Overhead of the VSR-AA protocol (assuming $|I| = w$).  The secret key storage is not counted.}
\end{center}
\vskip -0.5cm
\end{table*}

\subsubsection{Experimental Results on Computational Overhead}
Some empirical studies have been done on an Intel Pentium 4 3.00 GHz CPU, based on a secret key size of 80 bits and $|p| = 80$ bits.  The tag generation and tag verification for each TCP segment (with a payload size of 536 bytes) would take roughly 0.86 and 0.85 microseconds respectively.  With these figures, we believe the proposed VSR protocol is efficient enough to be used with TCP.  More importantly, the proposed protocol is provably secure, meaning the only way to break its security is to break the security property of the underlying pseudorandom function.

\subsection{\label{ssect:comm-overhead} An Estimate of the Communication Overhead}
The overhead size of the VSR-AA protocol in sending each TCP segment is $\lambda$ bits.  For $2^{80}$ security, it translates to a tag size of $80$ bits or 10 bytes.  As specified in RFC 879 \cite{rfc879}, the MTU (Maximum Transmission Unit) of an IP (Internet Protocol) datagram is 576 bytes.  This corresponds to a TCP payload size of 536 bytes (after subtracting the 20 bytes IP header and 20 bytes TCP header) during bulk data transfer, excluding the $SYN$-$ACK$ handshake messages.  Since MTU is adopted in the bulk data transfer phase\footnote{Note that TCP congestion control only adapts the congestion window size but not the TCP segment size.  MTU refers to the maximum allowed IP datagram size if no fragmentation is necessary.  Under normal circumstance, packing data to fit MTU is sought in almost all implementations.  Datagrams greater than MTU are preferred in some implementations, even though fragmentation and defragmentation are needed.} by default in almost all TCP implementations, and the handshake and other signalling messages only make up a constant overhead consuming a very small portion of the bandwidth, the payload size is asymptotically close to 536 out of 576 bytes for large data files to be transferred.   As a result, the communication overhead for embedding VSR tags for the VSR-AA protocol can roughly be estimated as $10/536 \approx 1.86\%$.  In a typical TCP session, most implementations would use MTU for data transfer, which seems to be the default.  Roughly, the average overhead would be about 2\% in most cases.

\subsection{\label{ssect:backward-compatibility} Backward Compatibility}
Clearly, the victims (which are normally TCP senders) have the biggest incentive to install the new TCP code running the VSR protocol.  The most common victims of the DoS attack in question are web servers, which have to accept connections from any client machines.  Normally, the owners of such machines would promptly upgrade or patch the machines' software, in particular, when a new vulnerability is discovered.  It is thus fair to say, initially, the TCP sender of most TCP sessions would have the capability of running the VSR protocol, and backward compatibility would only be an issue for TCP receivers.  For TCP receivers, some may be able to run the VSR protocol while others may not.

Although the proposed VSR protocol requires code modification at both TCP senders and receivers, the new TCP version running the VSR protocol could still be rolled out incrementally, with both new and old TCP receivers co-existing initially.  More specifically, the VSR protocol could be implemented as a TCP option \cite{rfc793}, with the VSR tag embedded in the TCP option field of a TCP segment; in other words, an old TCP receiver has the choice to opt out running the VSR protocol and need not generate a VSR proof.  In such cases, the new TCP sender could arbitrarily set and impose an upper limit on the transmission window size of all the TCP sessions established with old TCP receivers, and the limit could be a fraction of the maximum allowable size as permitted by the TCP congestion control algorithm.  In other words, the new TCP sender limits sending traffic to an old TCP receiver by a rate up to a fraction of the maximum transmission rate (as allowed by the TCP congestion control algorithm), while sending at full transmission rate (allowed by TCP congestion control algorithm and the VSR protocol) to a new TCP receiver.  The rationale is that, for a less trustworthy machine (running the old TCP receiver code), less privileges (in terms of bandwidth provided for its traffic) should be granted.  Such penalty is justified as machines running the new TCP receiver code have to prove that they are giving correct feedback to the sender, and hence are comparatively more trustworthy and deserve a full rate of receiving packets.  Hopefully, this would motivate the old TCP receiver machines to upgrade their codes to run the VSR protocol.

\subsection{Other Potential Applications}
\label{ssect:other-applications}
While the VSR protocol is designed based on an end-to-end flow control mechanism provided by TCP, its function and objective are to ensure correctness of feedback (about congestion condition downstream) returned from the receiver.  In other words, the VSR protocol can be applied to other network-layer rate-limiting mechanisms requiring security protection or correctness for feedback such as \cite{BJCSSK05,LYX10}.  A main advantage of the VSR protocol to such application is that the feedback could potentially be aggregated and have a smaller communication overhead.

\section{Related Work}
\label{sect:related-work}
Almost all DoS attacks recorded in the literature are carried out by compromised senders, sending high-volume traffic to the victim (receiver) aiming to cause network congestion and deny legitimate communication as a result.  Consequently, except the work based on in-network fair queuing with explicit router congestion notification \cite{BJCSSK05,LYL08,LYX10},\footnote{Collusion between sender and receiver is assumed possible in these approaches.} the majority of approaches to defend DoS attacks assume that the sender is malicious while the receiver is a victim.  Some of these defence mechanisms even trust the receiver to an extent that the defence or traffic policing has to be triggered on request by the receiver, for instance, pushback \cite{MBFIPS02}, AIP \cite{ABFKMS08} and AITF \cite{AC09}.  Clearly, there is no incentive for the misbehaving TCP receiver to push back or to contact the sender or its access router when launching an attack; in effect, these defence mechanisms could provide no protection to a DoS attack initiated by a misbehaving TCP receiver.  The crux of the misbehaving TCP receiver attack is that an additional level of indirection may switch the role of the attacker and victim, thus bypassing many of the existing defence mechanisms assuming the receiver as the victim.  In the misbehaving TCP receiver attack, by returning incorrect feedback, the attacker (as a receiver) deceives normally trustworthy senders (which could be highly trusted machines such as the web server for a highly reputable website, for instance, those owned by the national government) to flood the network.  A sophisticated attacker could even hide such an attack from undesired traffic identification program \cite{DDOS10} by selectively manipulating a larger number of innocent senders in such a way that the traffic of each individual sender may still comply to the acceptable bound most of the time.  Only per-receiver fair queuing schemes at the congested router such as \cite{YWA08,LYL08} can effectively police such high-volume aggregate traffic.  However, the feasibility of deployment of such schemes is still uncertain due to the large overhead involved in storing flow states.  Liu et. al. \cite{LYX10} recently proposed a rate-limiting architecture called NetFence, reducing the overhead for storing flow states.  In NetFence, traffic policing is performed by access routers based on unforgeable feedback from congested routers.  NetFence is based on network-layer policing and router feedback, whereas, the VSR protocol is based on sender flow control and end-to-end receiver feedback.  Hence, NetFence requires changes to the router code (as well as the receiver code for returning feedback messages) while the VSR protocol only requires end-host (sender and receiver) modification.  Besides, compared to the tags in the VSR protocol, feedback messages in NetFence are in the form of a relatively longer message secured by a message authentication code (MAC).  The MAC is slightly longer than the VSR tag, but the feedback message is much longer than the MAC or VSR tag.  On the other hand, the VSR protocol could possibly be applied to the NetFence architecture, with the advantage that only access routers need to be changed and no delivery of secret key from access routers to downstream routers is necessary.\footnote{The access routers would perform both VSR tag generation and verification.}

Based on approaches proposed in the literature, Clark gave a comprehensive list of design principles to defend DoS attacks in the Future Internet \cite{Clark09}.  Although some of the suggestions can reduce the attack's impact (mainly by degrading the botnet), the misbehaving TCP receiver DoS attack cannot be completely precluded with such principles.  In contrast, the proposed VSR protocol can completely eliminate the misbehaving TCP receiver attack.  Clark's suggestions include distinguishing between trusted and untrusted machines, degrading botnets, detecting and flagging infested attack machines, diffusing attacks, and rate-limiting traffic.
Of course, techniques for degrading botnets, detecting infested attack machines and diffusing attacks could reduce the impact of the TCP receiver attack.  However, it is still possible for a deliberate attacker to bypass these mechanisms, and the impact of the attack carried out by a single TCP receiver is still significant in some scenarios.  Besides, as Clark mentioned, it is important to assume that compromised nodes always exist and to design an architecture resilient to the presence of compromised nodes \cite{Clark09}.  The VSR protocol assumes the presence of compromised receiver and forces them to tell the truth regarding the network congestion condition.

In general, it is difficult to distinguish between trustworthy and untrusted machines, in particular, when the receiver is malicious as in the TCP receiver attack.  First, the senders are trustworthy in the TCP receiver attack.  Moreover, they are usually web servers which have to serve content to anyone.  That means the senders cannot discriminate connection requests from a potential attacker.  Furthermore, even with cryptography, it is difficult to guarantee that a host is uncompromised.  Detecting a malicious machine is uneasy, particularly that a machine could potentially be compromised after a connection is established.  Although CAPTCHA may be used to detect botnets, it still has its limitations such as accessibility problems and potential attacks.  In fact, infrequent human intervention would possibly thwart CAPTCHA and make the TCP receiver attack successful.  Second, allowing nodes to control who can send to them, the so-called "wish list" approach, is only effective to malicious senders, but does not protect against a misbehaving receiver.  Similarly, source authentication, such as AIP \cite{ABFKMS08} which equips all packets with self-certifying unspoofable addresses and \cite{LLYW08}, can only guard against a malicious sender, but not a misbehaving receiver.  Again, network capability based approaches such as \cite{PWSPMH07}, which enable a receiver to deny by default all traffic and explicitly accept traffic from identified legitimate and trustworthy sources, would not work either, with respect to the misbehaving TCP receiver attack.

Rate-limiting is the most effective defence against DoS attacks: attacking machines may be rate-limited in ways that allow legitimate users to continue preferentially if a strong framework for dealing with congestion is in place.  All rate-limiting architectures would rely on a certain triggering or detection mechanism.  If such a mechanism is initiated by the receiver such as \cite{MBFIPS02,ABFKMS08,AC09}, rate-limiting would be crippled by the TCP receiver attack.  Design based on triggering from congested routers such as \cite{BJCSSK05,LYL08,LYX10} or which performs rate-limiting through per-receiver fair queueing can effectively mitigate the TCP receiver attack, with various complexity in modification needed.  Network-layer congestion control is an important foundation for these mechanisms.  On the contrary, the VSR protocol is based on end-to-end rate-limiting, with round-trip congestion feedback and sender-controlled flow control.  The VSR protocol provides guarantee that the feedback from the receiver truly reflects the congestion status as can be inferred from end-to-end packet loss.

\section{Conclusions}
\label{sect:conclusions}
We believe that any potential attacks in a core protocol of the Internet protocol suite would jeopardize the resilience of the current day Internet, and the impact of misbehaving receiver could possibly be one of them.  We address the problem of misbehaving TCP receivers which lie about the receipt of TCP segments not actually received in order to cause congestion collapse in the Internet.  Our solutions are based on cryptographic tools, namely, an aggregate authenticator constructed based on a pseudorandom function.  The protocols proposed are provably secure and reasonably efficient for incorporation into the TCP protocol.  We demonstrate that the proposed protocol is sufficiently simple and efficient for incorporating into the TCP protocol.  This is the first light-weight provably secure solution to the problem of misbehaving TCP receiver.  Besides, the protocol could be applied to other network-layer rate-limiting architectures.

\vskip 2cm
\noindent {\bf Reference}
\bibliographystyle{plain}
\bibliography{./tcpsec,../rfc,../thesis,../key,../mybook}

\vskip 2cm
\appendix
\centerline{{\bf \sixteen Appendices \normalsize}}

\section{Proof of Security of VSR-AA}
\label{appendix:proof-vsr-aa}

\begin{proof}
Assume the PRF has some indistinguishability property as usual. We prove by contradiction, showing that a PPT adversary which can forge a valid pair $(x', y')$ (recall that $(x', y') = (x,y)$ with probability $1/p$) can also break the indistinguishability property of the underlying PRF.  We show the reduction\footnote{The reduction of the problem of breaking the indistinguishability of the PRF to the problem of forging a valid $(x', y')$ pair to pass the verification while $(x,y)$ is not received.} in two steps: first, we show that a forging algorithm to find $(x', y')$ can be used as a sub-routine to solve a newly defined problem called ``Under-determined Equation Set with Pseudorandom Unknowns (UESPU)"; then we show that the UESPU problem is computationally hard if the underlying PRF has the usual indistinguishability property.  The UESPU problem is defined as follows:

\medskip

\noindent {\bf \em Under-determined Equation Set with Pseudorandom Unknowns (UESPU) Problem} --- Suppose $K, K'$ are independent random seeds. Denote $K_I'$ as the sum $\sum_{i \in I} f_{K'}(i)$   Given a 2-tuple $(x, y)$ where $y = K_I' + K \cdot x$, find $(K, K_I')$ while allowed to evaluate the PRF at any input $i \not \in I$.

\medskip

\noindent {\bf Solving the UESPU problem using a forger of $(x',
y')$.}

Suppose there exists a PPT adversary ${\cal A}$ which can forge a valid pair $(x', y')$ to pass the VSR-AA proof verification test with probability $p_f$.  Using ${\cal A}$ as a subroutine, we can construct another algorithm ${\cal A}'$ to find $(K, K_I')$ from $(x, y)$ with probability $\frac{p-1}{p}p_f$ in any instance of the UESPU problem. Note that ${\cal A}'$ should be able to answer ${\cal O}_{TA}$ and ${\cal O}_{SQ}$ queries from ${\cal A}$ for any $i' \not \in I$ by passing the queries to its own challenger.  Note that ${\cal A}'$ can answer all ${\cal O}_{PB}$ and ${\cal O}_{PV}$ queries without external help.  For playback queries, ${\cal A}'$ can simply pick new seed keys to run a new session.

The construction of ${\cal A}'$ is as follows: Give ${\cal A}$ the pair $(x, y)$. When ${\cal A}$ returns a pair $(x', y') \neq (x, y)$, we can determine $K, K_I'$ from the resulting set of equations. The explanation is as follows:

\noindent Note that
$
\begin{array}{lcl}
y & = & K_I' + K \cdot x.
\end{array}
$
So we have one equations and 2 unknowns. If $(x', y')$ is a valid
forgery, then it must satisfy the following two equations (with the
same $K, K_I'$) in order to pass the verification test:
\[
\begin{array}{lcl}
y' & = & K_I' + K \cdot x'.
\end{array}
\]

The pair $(x', y')$ adds in one new, independent equation.  Since $(x', y') \neq (x, y)$ with probability $\frac{p-1}{p}$, it can be assured that the two equations are independent with high probability for large $p$.  Hence, there are two independent equations and two unknowns in total and it should be easy to solve for $K, K_I'$ (a contradiction to the UESPU assumption). The probability of solving the problem in the UESPU assumption is hence $\frac{p-1}{p}p_f$.

\medskip

\noindent {\bf A distinguisher for the PRF using
an algorithm which solves the UESPU problem.}

The UESPU problem is hard if $K, K_I'$ are generated by a PRF.  There are one equation and two unknowns which cannot be uniquely determined. It could be shown that if there exists an algorithm ${\cal A}'$ solving in poly-time $K$ and $K_I'$ from $x$ and $y$, then the indistinguishability property of the underlying PRF is broken.

The idea is as follows: assume the seed key for generating $K_I'$ is unknown and the key $K$ is known.  When a challenge $K_I'$ is received, we have to determine whether it is randomly
picked from a uniform distribution or generated by the PRF with an unknown seed key. We compute $y = K_I' + K \cdot x$ to ${\cal A}'$. If the solution from ${\cal A}'$ does not match the
generated $K_I'$, we reply that $K_I'$ is randomly picked, otherwise, it is generated from the PRF. If ${\cal A}'$ has non-negligible probability of breaking the UESPU assumption, the above construction would also has a non-negligible advantage of breaking the indistinguishability property of the underlying PRF. Note that all queries from ${\cal A}'$ could be answered by sending queries to the challenger and running the PRF with the known key.
\end{proof}

\end{document}